\let\To=\to
\renewcommand{\to}{\To}
\newcolumntype{C}{>{$}c<{$}}
\newcolumntype{L}{>{$}l<{$}}
\newcolumntype{R}{>{$}r<{$}}
\newtheorem{theorem}{Theorem}[section]
\newtheorem{proposition}[theorem]{Proposition}
\newtheorem{lemma}[theorem]{Lemma}
\newtheorem{corollary}[theorem]{Corollary}
\DeclareMathOperator{\gh}{gh}
\DeclareMathOperator{\pa}{p}
\DeclareMathOperator{\End}{End}
\DeclareMathOperator{\ad}{ad}
\DeclareMathOperator{\pr}{pr}
\newcommand{\A}{\mathcal{A}}
\newcommand{\F}{\mathcal{F}}
\newcommand{\CC}{\mathbb{C}}
\newcommand{\Z}{\mathbb{Z}}
\newcommand{\R}{\mathcal{R}}
\newcommand{\RR}{\mathbb{R}}
\newcommand{\CO}{\mathcal{O}}
\newcommand{\p}{\partial}
\newcommand{\half}{\tfrac{1}{2}}
\newcommand{\quart}{\tfrac{1}{4}}
\newcommand{\Q}{\mathsf{Q}}
\renewcommand{\[}{\{\!\!\{}
\renewcommand{\]}{\}\!\!\}}
\renewcommand{\P}{\mathsf{P}}
\newcommand{\X}{\mathsf{X}}
\newcommand{\E}{\mathsf{E}}
\newcommand{\C}{\mathsf{C}}
\newcommand{\s}{\mathsf{s}}
\newcommand{\ts}{\tilde\s}
\renewcommand{\t}{\mathsf{t}}
\renewcommand{\S}{\mathsf{S}}
\renewcommand{\SS}{\mathbb{S}}
\newcommand{\T}{\mathsf{T}}
\newcommand{\f}{\mathsf{f}}
\newcommand{\CV}{\mathcal{V}}
\newcommand{\eps}{\varepsilon}
\newcommand{\G}{\mathsf{G}}
\newcommand{\W}{\mathsf{W}}
\renewcommand{\frame}{\omega}
\newcommand{\om}{\omega}
\newcommand{\Om}{\Omega}
\renewcommand{\AA}{\mathbb{A}}
\newcommand{\BB}{\mathbb{B}}
\newcommand{\tint}{{\textstyle\int}}
\title{The spinning particle with curved target}
\author{Ezra Getzler}
\address{Department of Mathematics, Northwestern University, Evanston,
Illinois, USA}
\email{getzler@northwestern.edu}
\date{}
\begin{document}

\maketitle
\begin{abstract}
  We extend our previous calculation of the BV cohomology of
  the spinning particle with a flat target to the general case, in
  which the target carries a non-trivial pseudo-Riemannian metric and
  a magnetic field.
\end{abstract}

\section{Introduction}

Unlike in other models which have been investigated, the BV cohomology
of the spinning particle with a flat target is nontrivial in all
negative degrees \cite{cohomology}, raising the question of whether
our understanding of the BV formalism is incomplete. In this paper, we
show that these results extend to the spinning particle with general
target, in which the target carries a non-trivial pseudo-Riemannian
metric carrying a possibly non-zero magnetic field.

The quantum theory associated to this model is familiar to
mathematicians as the Dirac operator on a manifold; the magnetic field
corresponds to twisting by a complex line bundle.

The BV formalism associates to a solution of the classical master
equation
\begin{equation*}
  \left\{ \tint S\,dt,\tint S\,dt \right\} = 0
\end{equation*}
a vector field $\s$ on the space of fields, given by the explicit
formula
\begin{equation}
  \label{s}
  \s = \sum_i (-1)^{\pa(\Phi_i)}\sum_{\ell=0}^\infty \left( \p^\ell
    \biggl( \frac{\delta S} {\delta\Phi_i} \biggr)  \frac{\p\ }
    {\p(\p^\ell\Phi_i^+)} + \p^\ell \biggl(
    \frac{\delta S}{\delta\Phi_i^+} \biggr)
    \frac{\p\ }{\p(\p^\ell\Phi_i)} \right) .
\end{equation}
In Section 2, we show in complete generality that the classical master
equation implies that $\s^2=0$. Our proof of this statement employs a
modified Batalin-Vilkovisky (anti)bracket which differs from the usual
one by a total derivative, and satisfies the graded Jacobi formula on
densities, without the need for any total derivative corrections. This
bracket was introduced (in the ungraded setting) by Soloviev
\cite{Soloviev} and applied to BV geometry in \cite{darboux}.

In Section 3, we derive the master action of the spinning
article. With these technical details out of the way, we calculate the
BV cohomology of the spinning particle in Section~4: it turns out that
the description is essentially identical to the special case discussed
in \cite{cohomology}.

P. Mn\"ev has remarked (private communication) that the model
considered in this paper may also be constructed by the method of
Alexandrov et al.\ \cite{AKSZ}. We discuss this reformulation of the
theory at the end of Section~3.

In Section~4, we discuss the quantum master equation for the spinning
particle. One expects neither anomalies nor renormalization in a
quantum mechanical system, and this is confirmed by our calculations:
there is a potential contribution to the full action at one-loop
(which in fact vanishes for typical regularization schemes), and no
higher-loop contributions.

\section{The Batalin-Vilkovisky formalism}

In the Batalin-Vilkovisky formalism, there are fields $\Phi_i$, of
ghost number $\gh(\Phi_i)\in\Z$ and parity $\p(\Phi_i)\in\Z/2$, along
with the corresponding antifields $\Phi_i^+$, of ghost number
$\gh(\Phi_i^+)=1-\gh(\Phi_i)$, and parity
$\pa(\Phi_i^+)=1-\pa(\Phi_i)$.

We focus on the classical BV formalism for a single independent
variable $t$ (classical mechanics). Let $\p$ denote the total
derivative with respect to $t$. Denote by $\A^j$ the superspace of all
differential expressions in the fields and antifields with
$\gh(S)=j$. The sum $\A$ of the superspaces $\A^j$ for $j\in\Z$ is a
graded superalgebra. A vector field is a graded derivation of the
graded superalgebra $\A$. An example is the total derivative $\p$.

We denote by $\p_{k,\Phi}:\A^j\to\A^j$ the partial derivative
\begin{equation*}
  \p_{k,\Phi} = \frac{\p\ }{\p(\p^k\Phi)} ,
\end{equation*}
and by $\delta_{k,\Phi}:\A^j\to\A^j$ the higher Euler operators of
Kruskal et al.\ \cite{KMGZ}
\begin{equation*}
  \delta_{k,\Phi} = \sum_{\ell=0}^\infty \tbinom{k+\ell}{k} \, (-\p)^\ell
  \p_{k+\ell,\Phi} .
\end{equation*}
When $k=0$, $\delta_{0,\Phi}=\delta_\Phi$ is the classical variational
derivative.

A vector field $\xi$ is called \textbf{evolutionary} if it commutes
with $\p$. Such a vector field is determined by its value on the
fields $\Phi$ and the antifields $\Phi^+$:
\begin{align*}
  \xi &= \sum_i \sum_{k=0}^\infty \left( \p^k \bigl( \xi(\Phi_i)
        \bigr) \, \p_{k,\Phi_i} + \p^k \bigl( \xi(\Phi_i^+) \bigr) \,
        \p_{k,\Phi_i^+} \right) \\
      &= \sum_i \pr \left( \xi(\Phi_i) \frac{\p\ }{\p\Phi_i}
        + \xi(\Phi_i^+ ) \frac{\p\ }{\p\Phi_i^+} \right) .
\end{align*}
The operation $\pr$ is called \textbf{prolongation}.

The \textbf{Soloviev bracket} is defined by the formula
\begin{multline*}
  \[ f , g \] = \sum_i (-1)^{(\pa(f)+1)\pa(\Phi_i)} \\
  \sum_{k,\ell=0}^\infty \left( \p^\ell \bigl( \p_{k,\Phi_i}f \bigr)
    \p^k \bigl( \p_{\ell,\Phi_i^+}g \bigr) + (-1)^{\pa(f)} \p^\ell
    \bigl( \p_{k,\Phi_i^+}f \bigr) \p^k \bigl( \p_{\ell,\Phi_i}g
    \bigr) \right) .
\end{multline*}
It is proved in \cite{darboux} that the bracket $\[f,g\]$
satisfies the following equations:
\begin{description}
\item[skew symmetry]
  $\[ f , g \] = - (-1)^{(\pa(f)+1)(\pa(g)+1)} \[ g , f \]$ \\[-7pt]
\item[Jacobi]
  $\[ f , \[ g , h \] \] = \[ \[ f , g \] , h \] +
  (-1)^{(\pa(f)+1)(\pa(g)+1)} \[ g , \[ f , h \] \]$ \\[-7pt]
\item[linearity over $\p$]
  $\[ \p f , g \] = \[ f , \p g \] = \p \[ f , g \]$
\end{description}

The superspace $\F=\A/\p\A$ of functionals is the graded quotient of
$\A$ by the subspace $\p\A$ of total derivatives. The image of
$f\in\A$ in $\F$ is denoted by $\int f\,dt$, and the bracket induced
on $\F$ by the Soloviev bracket is denoted
\begin{equation*}
  \textstyle
  \{ \int f\,dt , \int g\,dt \} .
\end{equation*}
This bracket may also be written directly in terms of the variational
derivatives:
\begin{equation*}
  \{ \tint f\,dt , \tint g\,dt \} = \sum_i
  (-1)^{(\pa(f)+1)\pa(\Phi_i)}
  \int \left( \bigl( \p_{\Phi_i}f \bigr) \bigl( \p_{\Phi_i^+}g \bigr)
    + (-1)^{\pa(f)} \bigl( \p_{\Phi_i^+}f \bigr) \bigl( \p_{\Phi_i}g
    \bigr) \right) dt .
\end{equation*}

The Batalin-Vilkovisky formalism for classical field theory involves
the selection of a solution of the classical master
equation
\begin{equation*}
  \textstyle \{ \int S\,dt , \int S\,dt \} = 0 ,
\end{equation*}
where $S\in\A^0$ is an element with $\pa(S)=0$. When the antifields
are set to zero, the expression $S(\Phi,0)$ is the classical action.

Stated in terms of the Soloviev bracket, the classical master equation
becomes the equation
\begin{equation}
  \label{master}
  \half \[ S , S \] = \p\tilde{S} ,
\end{equation}
where $\tilde{S}\in\A^1$ is an element with $\pa(\tilde{S})=1$.

\begin{proposition}
  \label{fk}
  The differential operator $\ad(f)=\[f,-\]$ is given by the formula
  \begin{equation*}
    \ad(f) = \sum_{k=0}^\infty \p^k \f_k ,
  \end{equation*}
  where $\f_k$ is the sequence of evolutionary vector fields
  \begin{equation*}
    \f_k = \sum_i (-1)^{(\pa(f)+1)\pa(\Phi_i)} \pr\left( \bigl(
      \delta_{k,\Phi_i}f \bigr) \frac{\p\ }{\p\Phi_i^+} +
      (-1)^{\pa(f)} \bigl( \delta_{k,\Phi_i^+}f \bigr) \frac{\p\ }
      {\p\Phi_i} \right) .
  \end{equation*}
\end{proposition}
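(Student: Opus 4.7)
The plan is to expand both sides in elementary monomials of the form $\p^a(\delta_{b,\Phi_i}f)\p^p(\p_{c,\Phi_i^\pm}g)$ and match coefficients. Note that the axiom $\[f,\p g\]=\p\[f,g\]$ already shows that $\ad(f)$ commutes with $\p$, making a decomposition $\sum_k\p^k\f_k$ with evolutionary $\f_k$ plausible at the outset.

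The essential tool is the \emph{inverse Euler identity}
\begin{equation*}
  \p_{k,\Phi}=\sum_{n=0}^\infty\binom{k+n}{k}\,\p^n\,\delta_{k+n,\Phi},
\end{equation*}
dual to the defining formula of the higher Euler operator. Substituting the definition of $\delta_{k+n,\Phi}$, swapping sums, and applying the identity $\binom{k+\ell}{k}\binom{k+N}{k+\ell}=\binom{k+N}{k}\binom{N}{\ell}$, the coefficient of $\p^N\p_{k+N,\Phi}$ collapses via $\sum_{\ell=0}^N\binom{N}{\ell}(-1)^{N-\ell}=0^N$, leaving only the $N=0$ term.

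Applying this identity to the first summand of the Soloviev bracket and reindexing by $b=k+n$, $a=\ell+n$ yields
\begin{equation*}
  \sum_{k,\ell\ge0}\p^\ell(\p_{k,\Phi_i}f)\,\p^k(\p_{\ell,\Phi_i^+}g)=\sum_{a,b\ge0}\;\sum_{p=\max(0,b-a)}^{b}\binom{b}{p}\,\p^a(\delta_{b,\Phi_i}f)\,\p^p(\p_{a-b+p,\Phi_i^+}g).
\end{equation*}
On the other side, since $\f_k$ is evolutionary, its first term contributes $\sum_{i,m}(-1)^{(\pa(f)+1)\pa(\Phi_i)}\p^m(\delta_{k,\Phi_i}f)\,\p_{m,\Phi_i^+}g$ to $\f_k(g)$; applying $\p^k$ via the Leibniz rule and reindexing via $b=k$, $a=m+j$, $p=k-j$ produces exactly the same double sum. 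The second summand (with $\Phi_i\leftrightarrow\Phi_i^+$ and the extra sign $(-1)^{\pa(f)}$) is handled identically. The argument is pure bookkeeping once the inverse Euler identity is in hand; the main obstacle is tracking the four indices and their constraints cleanly through the reparametrization.
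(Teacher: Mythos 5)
Your proposal is correct and is essentially the paper's own computation run in the opposite direction: the paper starts from $\sum_k\p^k\f_k$, expands the higher Euler operators, and collapses back to the Soloviev bracket via the identity $\tbinom{k+\ell}{k}\tbinom{k+N}{k+\ell}=\tbinom{k+N}{k}\tbinom{N}{\ell}$ and the alternating binomial sum $\sum_\ell(-1)^\ell\tbinom{N}{\ell}=\delta_{N,0}$, which are exactly the ingredients of your ``inverse Euler identity.'' Packaging that identity as a standalone lemma and then matching monomials is a slightly cleaner organization, but the combinatorial content is identical, and your index bookkeeping (the range $\max(0,b-a)\le p\le b$ and the coefficient $\tbinom{b}{p}$ on both sides) checks out.
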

\begin{proof}
  We see that
  \begin{multline*}
    \sum_{j,k,\ell=0}^\infty (-1)^j \, \tbinom{k+j}{k} \, \p^k \left(
      \p^{\ell+j} \left( \p_{k+j,\Phi}f \right) \p_{\ell,\Phi^+}g \right) \\
    \begin{aligned}
      &= \sum_{i,j,k,\ell=0}^\infty (-1)^j \, \tbinom{k+j}{k} \,
      \tbinom{k}{i} \, \p^{\ell+i+j} \left( \p_{k+j,\Phi}f \right) \,
      \p^{k-i} \left( \p_{\ell,\Phi^+}g \right) \\
      &= \sum_{i,j,\ell,m=0}^\infty (-1)^j \, \tbinom{m}{m-i-j} \,
      \tbinom{i+j}{j} \, \p^{\ell+i+j} \left( \p_{m,\Phi}f \right) \,
      \p^{m-i-j} \left( \p_{\ell,\Phi^+}g \right) \\
      &= \sum_{\ell,m=0}^\infty \p^\ell \left( \p_{m,\Phi}f \right) \,
      \p^m \left( \p_{\ell,\Phi^+}g \right) ,
    \end{aligned}
  \end{multline*}
  and the analogous equation holds with the roles of $\Phi$ and
  $\Phi^+$ exchanged. Summing over the fields $\Phi_i$, the result
  follows.
\end{proof}

Given a solution of the classical master equation \eqref{Master}, the
functions $S$ and $\tilde{S}$ give rise to the evolutionary vector
fields $\s_k$ and $\ts_k$ respectively, where the vector field $\s_0$
is the vector field $\s$ of \eqref{s}. Define the vector fields
\begin{equation*}
  \sigma_k = \ts_k - \half \sum_{\ell=0}^{k+1} [\s_\ell,s_{k-\ell+1}] .
\end{equation*}

\begin{lemma}
  \begin{equation}
    \label{ss}
    \s^2 = \sum_{k=0}^\infty \p^{k+1} \sigma_k
  \end{equation}
\end{lemma}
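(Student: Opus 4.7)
My plan is to establish the operator identity $\ad(S)^2 = \p \circ \ad(\tilde S)$ on $\A$, expand both sides using Proposition \ref{fk}, and then extract $\s^2$ from the $(0,0)$ piece of the resulting double sum.

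The operator identity is a standard consequence of Jacobi and the master equation. Taking $f = g = S$ in the Jacobi rule and noting that $(-1)^{(\pa(S)+1)(\pa(S)+1)} = -1$ since $\pa(S) = 0$, one gets
\begin{equation*}
\[ S , \[ S , h \] \] = \[ \[ S , S \] , h \] - \[ S , \[ S , h \] \] ,
\end{equation*}
which rearranges to $\[ S , \[ S , h \] \] = \half \[ \[ S , S \] , h \]$. The master equation \eqref{master} and $\p$-linearity of the bracket then yield
\begin{equation*}
\[ S , \[ S , h \] \] = \[ \p \tilde S , h \] = \p \[ \tilde S , h \] .
\end{equation*}

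Now write $\ad(S) = \sum_k \p^k \s_k$ and $\ad(\tilde S) = \sum_m \p^m \ts_m$ per Proposition \ref{fk}. Since each $\s_k$ is evolutionary, it commutes with $\p$, giving
\begin{equation*}
\ad(S)^2(g) = \sum_{k, \ell \geq 0} \p^{k+\ell}(\s_k \s_\ell g), \qquad \p\,\ad(\tilde S)(g) = \sum_{m \geq 0} \p^{m+1}(\ts_m g) .
\end{equation*}
The $k = \ell = 0$ summand on the left is $\s^2 g$; moving every other contribution to the right and regrouping by the total power $n = k + \ell \geq 1$ of $\p$ produces
\begin{equation*}
\s^2 = \sum_{k \geq 0} \p^{k+1}\Bigl( \ts_k - \sum_{\ell + m = k+1} \s_\ell \s_m \Bigr) .
\end{equation*}

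Finally, because $\pa(S) = 0$ each $\s_k$ is an odd derivation, and the reflection $(\ell, m) \leftrightarrow (m, \ell)$ of the index set $\{(\ell, m) : \ell + m = k+1\}$ yields $2 \sum_{\ell + m = k+1} \s_\ell \s_m = \sum_{\ell = 0}^{k+1} [\s_\ell, \s_{k - \ell + 1}]$, which matches the definition of $\sigma_k$. The main work is this odd-commutator symmetrization together with the index shift $n = k+1$ in the double sum; I do not anticipate any deeper obstacle.
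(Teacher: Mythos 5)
Your proof is correct and follows essentially the same route as the paper: expand $\ad(S)^2=\p\,\ad(\tilde S)$ via Proposition~\ref{fk}, isolate the $(0,0)$ term, regroup by total power of $\p$, and symmetrize the quadratic sum into graded commutators of the odd vector fields $\s_k$. The only (cosmetic) difference is that you derive the operator identity directly from the Jacobi and $\p$-linearity properties of the Soloviev bracket, whereas the paper cites $(d+\ad(\S))^2=0$; your version is self-contained and also fixes the sign, since the identity should read $\ad(S)^2=\p\,\ad(\tilde S)$ as your displayed equations (and the paper's own subsequent display) indicate.
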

\begin{proof}
  The equation $(d+\ad(\S))^2=0$ implies that
  $\ad(S)^2+\p\ad(\tilde{S})=0$. In other words,
  \begin{equation*}
    \s^2 + \sum_{k=0}^\infty \sum_{\ell=0}^{k+1}
    \p^{k+1}\s_\ell\s_{k-\ell+1}
    = \sum_{k=0}^\infty \p^{k+1}\ts_k ,
  \end{equation*}
  which proves the result after a little rearrangement.
\end{proof}

We can now prove the main result of this section.
\begin{theorem}
  If $S$ is a solution of the classical master equation
  \eqref{master}, then the associated vector field $\s$ satisfies the
  equation $\s^2=0$.
\end{theorem}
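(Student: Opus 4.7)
The plan is to combine the identity \eqref{ss} of the preceding lemma with the fact that $\s^2$ is a graded derivation of $\A$ to force each $\sigma_k$, and hence $\s^2$ itself, to vanish.

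The starting observation is that $\pa(\ad S) = \pa(S) + 1 = 1$, so $\s = \s_0$ is an odd graded derivation of $\A$, and therefore $\s^2 = \half[\s,\s]$ is an even graded derivation. Each $\sigma_k$ is itself an evolutionary graded derivation, but for $k \ge 0$ the composite $\p^{k+1}\sigma_k$ is not a derivation; the identity $\s^2 = \sum_{k\ge 0}\p^{k+1}\sigma_k$ therefore enforces nontrivial Leibniz-type cancellations among the $\sigma_k$.

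To make these explicit, I would apply both sides of \eqref{ss} to a product $fg \in \A$ and expand using the Leibniz rule for $\p^{k+1}$. The $j=0$ term of $\p^{k+1}(\sigma_k(f)g)$ together with the $j=k+1$ term of $\p^{k+1}(f\sigma_k(g))$ reassemble into $\s^2(f)g + f\s^2(g)$, matching the derivation rule for $\s^2$; the remaining off-diagonal terms must vanish identically in $f,g\in\A$:
\begin{equation*}
  \sum_{k\ge 0}\sum_{j=1}^{k+1}\tbinom{k+1}{j}\bigl[\p^{k+1-j}\sigma_k(f)\,\p^j g + \p^j f\,\p^{k+1-j}\sigma_k(g)\bigr] = 0.
\end{equation*}

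Finally, I would specialize $f = \Phi_i$, $g = \Phi_j$ with $i \ne j$, regard the identity as a polynomial relation in the jet variables $\{\p^m\Phi_a\}$, and perform a descent on $k$. Since $S$ is a differential polynomial, the sum over $k$ is finite; if $K$ is the largest index with $\sigma_K \ne 0$, then a coefficient comparison against a suitable high-weight jet monomial in $\Phi_j$ will isolate $\sigma_K(\Phi_i)$ and force it to zero, contradicting the maximality of $K$. The same reasoning applied to the antifields yields $\sigma_k(\Phi_i^+) = 0$, and the evolutionarity of $\sigma_k$ then gives $\sigma_k = 0$ for every $k$, whence $\s^2 = 0$. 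The main obstacle is the coefficient extraction: applying $\p^{k+1-j}$ to $\sigma_k(\Phi_i)$ can itself produce high-order jet variables of $\Phi_j$, so the target monomial and the order of descent must be chosen with care, using the finiteness of the sum and the distinctness of $i, j$ to isolate the leading $\sigma_K$ contribution.
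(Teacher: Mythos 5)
Your overall strategy---play the fact that the left-hand side of \eqref{ss} is a first-order operator (a derivation) against the fact that the right-hand side is a differential operator of higher order, then descend from the top index $K$---is exactly the idea of the paper's proof. However, the step you yourself flag as ``the main obstacle'' is the real content of the argument, and your two-factor Leibniz identity does not resolve it. Expanding $\s^2(fg)$ once yields an identity in which the term $\sigma_K(\Phi_i)\,\p^{K+1}\Phi_j$ sits alongside terms $\p^{k+1-j}\sigma_k(\Phi_i)\,\p^j\Phi_j$ and $\p^j\Phi_i\,\p^{k+1-j}\sigma_k(\Phi_j)$, and the jet variable $\p^{K+1}\Phi_j$ (and higher ones) can perfectly well occur inside the total derivatives of $\sigma_k(\Phi_i)$ and $\sigma_k(\Phi_j)$. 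There is no canonical monomial whose coefficient isolates $\sigma_K(\Phi_i)$, and you do not exhibit one; a single Leibniz expansion only probes the operator identity to first order and cannot separate a first-order operator from one of order $K+2$.

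The paper's resolution is to iterate: take the $(K+2)$-fold commutator of both sides of \eqref{ss} with a single bosonic field $\Phi$ (equivalently, test the operator identity on a product of $K+2$ copies of $\Phi$). Each commutator with a function lowers the order of a differential operator by one, so the $(K+2)$-fold commutator annihilates the first-order operator $\s^2$ and every lower-order term $\p^{k+1}\sigma_k$ with $k<K$, while the top term contributes exactly $(K+2)!\,(\p\Phi)^{K+1}\,\sigma_K(\Phi)$; this is the clean symbol computation that your product expansion is groping toward. One then obtains $\sigma_K(\Phi^+)=0$ by commuting once with $\Phi^+$ and $(K+1)$ times with $\Phi$, handles fermionic fields by exchanging the roles of $\Phi$ and $\Phi^+$, and descends on $K$. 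If you replace your two-factor expansion by this multi-commutator computation, the rest of your outline (finiteness of the sum over $k$, evolutionarity of the $\sigma_k$, downward induction) goes through as you describe.
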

\begin{proof}
  The idea of the proof is that whereas the right-hand side is a
  vector field of \eqref{ss} is a vector field, the left-hand side is
  a differential operator of degree $>1$. Taking the symbols of both
  sides, we see that the symbol of this differential operator must
  vanish.

  We now prove by downward induction in $k$ that the vector fields
  $\sigma_k$ vanish. Let $K$ be the largest integer such that
  $\sigma_K$ is nonzero. (For the solution of the classical master
  equation associated to a first-order field theory, $K=1$.) Let
  $\Phi$ be one of the fields of the theory having $\pa(\Phi)=0$ (that
  is, a bosonic field), and take the $(K+2)$-fold commutator of both
  sides of \eqref{ss} with $\Phi$.  The differential operator $\s^2$
  is a vector field, so the left-hand side vanishes, while the
  right-hand side equals
  \begin{equation*}
    (K+2)! \, (\p\Phi)^{K+1} \sigma_K(\Phi) .
  \end{equation*}
  It follows that $\sigma_K(\Phi)=0$.

  Next, we take the commutator with the antifield $\Phi^+$ followed by
  the $(K+1)$-fold commutator with $\Phi$: again, the left-hand side
  vanishes, while the right-hand side equals
  \begin{equation*}
    (K+1)! \, (\p\Phi)^K \Bigl( (\p\Phi) \sigma_K(\Phi^+) +
    (K+1) (\p\Phi^+) \sigma_K(\Phi) \Bigr) .
  \end{equation*}
  We have already shown that the second of the two term vanishes, and
  we conclude that $\sigma_K(\Phi^+)=0$.

  The vanishing of $\sigma_K(\Phi)$ and $\sigma_K(\Phi^+)$ may be
  proved for fields $\Phi$ with $\pa(\Phi)=1$ (fermionic fields) by
  exchanging the r\^oles of $\Phi$ and its antifield $\Phi^+$ in the
  above argument. In this way, we see that $\sigma_K=0$. Arguing by
  downward induction, we conclude that $\sigma_k=0$ for all $k\ge0$,
  proving the theorem.
\end{proof}

The vector field $\s$ induces a differential on $\F$, whose cohomology
$H^*(\F,\s)$ is the Batalin-Vilkovisky cohomology of the model. By
Proposition~\ref{fk}, $\s$ equals the differential $\ad(S)$ induced by
taking Soloviev bracket with the solution $S$ of the classical master
equation.

We may calculate the BV cohomology groups $H^*(\F,\s)$ using the
complex
\begin{equation*}
  \CV^j = \A^j \oplus \tilde{\A}^{j+1}\,\eps ,
\end{equation*}
where
\begin{equation*}
  \tilde{\A}^j =
  \begin{cases}
    \A^0/\CC , & j=0 , \\
    \A^j , & \text{$j\ne0$} ,
  \end{cases}
\end{equation*}
with differential
\begin{equation*}
  d \bigl( f + g\,\eps \bigr) = (-1)^{\pa(g)} \, \p g .
\end{equation*}
The symbol $\eps$ is understood to have odd parity and ghost number
$-1$, so that the parities of the superspace $\tilde{\A}^{j+1}$ are
reversed in $\CV^j$. This complex is a shifted differential graded Lie
algebra, with respect to the extension of the Soloviev bracket to
$\CV$:
\begin{equation*}
  \[ f_0 + g_0\,\eps , f_1 + g_1\,\eps \] = \[ f_0 , f_1 \] + \[ f_0
  , g_1 \] \, \eps + (-1)^{\pa(f_1)+1} \, \[ g_0 , f_1 \] \, \eps .
\end{equation*}
The differential satisfies
\begin{equation*}
  d \[ a , b\] = \[ da , b \] + (-1)^{\pa(a)+1} \[ a , db \] .
\end{equation*}

\begin{lemma}
  \label{S}
  If $\int S\,dt\in\F$ is a solution of the classical master equation
  \eqref{master}, then
  \begin{equation*}
    \S = S + \tilde{S}\,\eps \in \CV^0
  \end{equation*}
  is a solution of the master equation
  \begin{equation}
    \label{Master}
    d\S + \half \[ \S , \S \] = 0 .
  \end{equation}
\end{lemma}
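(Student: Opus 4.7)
The plan is to verify \eqref{Master} componentwise in the direct sum decomposition $\CV^1 = \A^1 \oplus \tilde{\A}^2\,\eps$, unfolding $d\S$ and $\half\[\S,\S\]$ directly from their definitions.

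Since $\pa(\tilde{S})=1$, the formula for the differential gives $d\S = -\p\tilde{S} \in \A^1$. Substituting $\S = S + \tilde{S}\,\eps$ into the extension of the Soloviev bracket to $\CV$ produces several terms; using the skew-symmetry of the Soloviev bracket together with $\pa(S)=0$, $\pa(\tilde{S})=1$, which gives $\[\tilde{S},S\] = -\[S,\tilde{S}\]$, these collapse to
\begin{equation*}
  \half\[\S,\S\] = \half\[S,S\] + \[S,\tilde{S}\]\,\eps.
\end{equation*}
Summing, the $\A^1$-component of $d\S + \half\[\S,\S\]$ is $-\p\tilde{S} + \half\[S,S\]$, which vanishes by \eqref{master}.

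It remains to show that the $\eps$-component $\[S,\tilde{S}\]$ also vanishes, which is the heart of the argument. I would apply $\ad(S)$ to both sides of \eqref{master} and use linearity of the Soloviev bracket over $\p$ to obtain $\p\[S,\tilde{S}\] = \half\[S,\[S,S\]\]$. The right-hand side vanishes purely formally: graded Jacobi with $f=g=h=S$ gives $2\[S,\[S,S\]\] = \[\[S,S\],S\]$, while skew-symmetry (with $\pa(\[S,S\])=1$) gives $\[\[S,S\],S\] = -\[S,\[S,S\]\]$, and these two identities force $\[S,\[S,S\]\] = 0$.

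The main obstacle is the final inference, concluding from $\p\[S,\tilde{S}\]=0$ that $\[S,\tilde{S}\]=0$ in $\A^2$. This relies on the injectivity of $\p$ on $\A^2$: any differential expression in the fields, antifields, and their derivatives annihilated by $\p$ is constant, and $\A^2$ admits no nonzero constants since its elements all have ghost number $2 \ne 0$. With that injectivity in hand, $\[S,\tilde{S}\]=0$ and \eqref{Master} follows.
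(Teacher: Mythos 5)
Your proof is correct and follows essentially the same route as the paper: the paper likewise applies $\ad(S)$ to \eqref{master}, uses linearity of the Soloviev bracket over $\p$ to obtain $\half\[S,\[S,S\]\]=\[S,\p\tilde{S}\]=\p\[S,\tilde{S}\]$, and concludes $\[S,\tilde{S}\]=0$. You have simply made explicit the steps the paper leaves implicit, namely the componentwise expansion in $\CV^1$, the Jacobi/skew-symmetry argument that $\[S,\[S,S\]\]=0$, and the injectivity of $\p$ on $\A^2$.
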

\begin{proof}
  Applying the operator $\ad(S)$ to both sides of \eqref{master}, we
  see that
  \begin{equation*}
    \half \[ S , \[ S , S \] \] = \[S,\p\tilde{S}\] =
    \p\[S,\tilde{S}\] ,
  \end{equation*}
  and hence that $\[S,\tilde{S}\]=0$.
\end{proof}

For example, the Poisson structure of the KdV hierarchy (Dickey
\cite{Dickey}; cf.\ \cite{darboux}) gives a solution of the classical
master equation \eqref{Master} with $\gh(S)=-2$ instead of $0$, and
$\gh(\eps)=1$ instead of $-1$:
\begin{equation*}
  \S = x^+\p^3x^+ + xx^+\p x^+ + x^+\p x^+\p^2x^+\,\eps .
\end{equation*}

The differentials $d+\s$ and $d+\ad(\S)$ on $\CV^*$ are equivalent, by
the following proposition.
\begin{proposition}
  Let $P$ be the automorphism of $\CV^*$ defined by the formula
  \begin{equation*}
    P ( f + g\eps ) = f + g \, \eps + (-1)^{\pa(f)} \sum_{k=0}^\infty
      \p^k\s_{k+1}f \, \eps .
  \end{equation*}
  Then the differentials $d+\ad(S)$ and $d+\s$ on $\CV$ are related by
  the equation
  \begin{equation*}
    d + \ad(\S) = P(d+\s)P^{-1} .
  \end{equation*}
\end{proposition}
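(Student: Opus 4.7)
The plan is to write $P = I + N$, where $N(f + g\eps) = (-1)^{\pa(f)}\sum_{k\ge 0}\p^k\s_{k+1}(f)\cdot\eps$. Since $N$ maps the $\A$-summand into the $\A\eps$-summand and annihilates the $\A\eps$-summand, $N^2 = 0$ and hence $P^{-1} = I - N$. The desired identity is equivalent to $(d + \ad(\S))P = P(d + \s)$, which I would verify separately on the two summands of $\CV$.

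The case $g\eps \in \A\eps$ is essentially formal: $P$ acts trivially here, and direct expansion using the bracket extension to $\CV$, the formula for $d$, and the decomposition $\ad(S) = \s_0 + \sum_{k\ge 0}\p^{k+1}\s_{k+1}$ from Proposition~\ref{fk} shows that both sides reduce to $(-1)^{\pa(g)}\p g + \ad(S)(g)\cdot\eps$. On $f \in \A$, the $\A$-components of both sides collapse to $\s(f)$ after using the same decomposition, and equating $\A\eps$-components reduces the proposition to the single operator identity
\begin{equation*}
  \ad(\tilde S) - \sum_{k\ge 0}\p^k\,\ad(S)\,\s_{k+1}
  \;=\; \sum_{k\ge 0}\p^k\,\s_{k+1}\,\s_0 .
\end{equation*}
This is the crux of the argument. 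It follows from the theorem: the vanishing $\sigma_k = 0$ yields $\ts_k = \sum_{\ell=0}^{k+1}\s_\ell\,\s_{k+1-\ell}$, so that $\ad(\tilde S) = \sum_m \p^m \sum_{\ell=0}^{m+1}\s_\ell\,\s_{m+1-\ell}$. After re-indexing via $m = k + j$ and using $\s_j\p = \p\s_j$, the second term on the left becomes $\sum_m \p^m \sum_{n=0}^m \s_n\,\s_{m+1-n}$, so the difference picks up exactly the omitted $\ell = m+1$ term, which is $\s_{m+1}\s_0$.

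The main obstacle is the sign bookkeeping. Several conventions interact: $\eps$ is odd, $\s_k$ is odd while $\ts_k$ is even, the extended bracket on $\CV$ carries the sign $(-1)^{\pa(f_1)+1}$ in its mixed term, and $d$ carries a sign $(-1)^{\pa(g)}$. Once these are organized on each summand, the proposition reduces to the operator identity above, and hence to the theorem.
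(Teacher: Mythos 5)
Your proposal is correct and follows essentially the same route as the paper: both verify the equivalent identity $(d+\ad(\S))P=P(d+\s)$ by direct expansion on the two summands of $\CV$, with the crux being the vanishing of the vector fields $\sigma_k$ (equivalently $\ts_k=\sum_{\ell=0}^{k+1}\s_\ell\s_{k-\ell+1}$), which is exactly the operator identity you isolate. Your explicit packaging of that step as $\ad(\tilde S)-\sum_k\p^k\ad(S)\s_{k+1}=\sum_k\p^k\s_{k+1}\s_0$ is just a reorganization of the paper's computation, not a different argument.
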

\begin{proof}
   Written out in full, we have
   \begin{align*}
     (d+\ad(\S)) \bigl( f + g\eps \bigr)
     &= \[ S , f \] +(-1)^{\pa(g)}\,\p g + \left( (-1)^{\pa(f)} \, \[
       \tilde{S} , f \] + \[ S , g \] \right) \eps \\
     &= \sum_{k=0}^\infty \p^k\s_kf +(-1)^{\pa(g)}\,\p g + 
       \sum_{k=0}^\infty \left( (-1)^{\pa(f)} \, \p^k\ts_kf +
       \p^k\s_kg \right) \eps .
   \end{align*}
   We see that
   \begin{align*}
     (d+\ad(S))P(f+g\eps) &= \s f + (-1)^{\pa(g)}\,\p g +
     \sum_{k=0}^\infty \left( (-1)^{\pa(f)} \p^k \bigl( \sigma_kf -
       \s_{k+1}\sf \bigr) + \p^k\s_kg \right) \eps \\
     &= \s f + (-1)^{\pa(g)}\,\p g + \sum_{k=0}^\infty \left(
       (-1)^{\pa(f)+1}\,\p^k\s_{k+1}\sf + \p^k\s_kg \right) \eps \\
     &= P(d+\s)(f+g\eps) ,
   \end{align*}
   where on the second line, we have used the vanishing of the vector
   fields $\sigma_k$.
 \end{proof}

\section{The classical master equation for the spinning particle in
  curved target}

In this section, we construct the solution of the classical master
equation associated to the spinning particle in a curved target.

Let $\RR^d$ be a vector space with constant pseudo-metric
$\eta_{ab}=\eta(e_a,e_b)$. The target of the spinning particle is an
open subset $U$ of $\RR^d$, carrying a Riemannian pseudo-metric
$g_{\mu\nu}=g(\p_\mu,\p_\nu)$ with the same signature as $\eta$. Let
$g^{\mu\nu}=g(dx^\mu,dx^\nu)$ be the metric induced by $g$ on the
tangent bundle. In other words,
\begin{equation*}
  g_{\mu\lambda} g^{\lambda\nu} = \delta^\nu_\mu .
\end{equation*}
Similarly, let $\eta^{ab}e_a\otimes e_b$ be the pseudo-metric induced on
$(\RR^d)^*$ by $\eta$.

We will represent the pseudo-metric $g^{\mu\nu}$ by a moving frame
$\frame^a=\frame_\mu^a\,dx^\mu$. Geometrically speaking, a moving
frame is an isometry between the trivial bundle $U\times\RR^d$ with
constant pseudo-metric $\eta$ and the tangent bundle of
$U$. Equivalently, the one-forms $\{\frame^a\}$ satisfy the equation
\begin{equation*}
  g(\frame^a,\frame^b) = \eta^{ab} ,
\end{equation*}
or
\begin{equation*}
  g_{\mu\nu} = \eta_{ab} \frame_\mu^a \frame_\nu^b .
\end{equation*}
We denote by $\frame^\mu_a$ the inverse of $\frame_\mu^a$, in the
sense that
\begin{equation*}
  \frame_\mu^a\frame^\mu_b=\delta^a_b .
\end{equation*}
We may use the frame $\om_\mu^a$ and its inverse $\om^\mu_a$ to
exchange contravariant and covariant indices $\mu$ with upper and
lower internal indices $a$: for example, $A_a=\om^\mu_a A_\mu$.

The physical fields of the spinning particle (fields of ghost number
$0$) are as follows:
\begin{enumerate}[a)]
\item the position $x^\mu$, which is a field of even parity taking
  values in $U$;
\item fields $p_a$ and $\theta^a$, respectively of even and odd
  parity;
\item the graviton $e$ and gravitino $\psi$, respectively even and odd.
\end{enumerate}
In addition, the model has ghosts $c$ and $\gamma$ (fields of ghost
number $1$), corresponding respectively to diffeomorphism in the
independent variable $t$ and local supersymmetry, which are
respectively odd and even.

The connection one-form
$\om^a{}_b=\om_\mu{}^a{}_bdx^\mu\in\Omega^1(U,\End(\RR^d))$ is a
matrix of one-forms on $U$ characterized in terms of the frame
$\om_\mu^a$ by two conditions: it is \textbf{skew-symmetric}
\begin{equation*}
  \om^b{}_a = - \eta_{a\tilde{a}} \eta^{b\tilde{b}}
  \om^{\tilde{a}}{}_{\tilde{b}} ,
\end{equation*}
and \textbf{torsion-free}, that is, satisfies the first Cartan
structure equation
\begin{equation*}
  d\frame^a + \om^a{}_b \wedge \frame^b = 0 .
\end{equation*}
Written in terms of components, this equation becomes
\begin{equation*}
  \p_\mu\frame_\nu^a - \p_\nu\frame_\mu^a + \om_\mu{}^a{}_b
  \frame_\nu^b - \om_\nu{}^a{}_b \frame_\mu^b =
  0 .
\end{equation*}

The curvature
$R^a{}_b=\half R_{\mu\nu}{}^a{}_bdx^\mu dx^\nu\in\Omega^2(U,\End(\RR^d))$
is a skew-symmetric matrix of two-forms defined by the second Cartan
structure equation
\begin{equation*}
  d\om^a{}_b + \om^a{}_c \wedge \om^c{}_b = R^a{}_b .
\end{equation*}
Written in terms of components, this equation reads
\begin{equation*}
  \p_\mu\om_\nu{}^a{}_b - \p_\nu\om_\mu{}^a{}_b + \om_\mu{}^a{}_c
  \om_\nu{}^c{}_b - \om_\nu{}^a{}_c \om_\mu{}^c{}_b =
  R_{\mu\nu}{}^a{}_b .
\end{equation*}
We will need the Bianchi identities for the curvature
$R_{\mu\nu a b}$: the antisymmetrizations of the expressions
$\frame_\lambda^a R_{\mu\nu ab}$ and
\begin{equation*}
  \p_\lambda R_{\mu\nu ab} + \om_\lambda{}^c{}_a R_{\mu\nu cb} -
  \om_\lambda{}^c{}_b R_{\mu\nu ca}
\end{equation*}
in the indices $\{\lambda,\mu,\nu\}$ vanish.

We also introduce a magnetic potential (connection one-form)
\begin{equation*}
  A = A_\mu dx^\mu \in \Omega^1(U)
\end{equation*}
on $U$, with associated field-strength (curvature) $F=dA$, or in terms
of components,
\begin{equation*}
  F_{\mu\nu} = \p_\mu A_\nu - \p_\nu A_\mu .
\end{equation*}

We now turn to the construction of the solution $\S$ of the classical
master equation associated to the moving frame $\frame^a_\mu$ and
magnetic field $A_\mu$. In all of our calculations, the antifield
$x^+_\mu$ enters via the expression
\begin{equation*}
  X^+_a = \om^\mu_a \bigl( x^+_\mu + \om_\mu{}^b{}_c \, p_b p^{+c} -
  \om_\mu{}^b{}_c \, \theta^+_b \theta^c \bigr) .
\end{equation*}

\begin{lemma}
  \label{QG}
  Let $\Sigma\in\A^0$ and $\G\in\A^{-2}$ be given by the formulas
  \begin{align*}
    \Sigma &= \bigl( p_\mu + A_\mu \bigr) \p x^\mu - \half
             \bigl( \eta_{ab} \theta^a \p\theta^b + \om_{\mu ab} \, \p
             x^\mu \theta^a \theta^b \bigr) + \p e^+ c + \p\psi^+\gamma \\
    \G &= X^+_a p^{+a} + \quart p^{+a} p^{+b} \theta^c \theta^d
         R_{abcd} + \half p^{+a} p^{+b} F_{ab} - \half \eta^{ab}
         \theta^+_a \theta^+_b + c^+e + \gamma^+ \psi .
  \end{align*}
  Then $\[\Sigma,\Sigma\]=\[\G,\G\]=0$, and $\[\Sigma,\G\]=\T$, where
  \begin{align*}
    \T &= - x^+_\mu \p x^\mu + \p p^{+a} p_a + \half \bigl(
      \p\theta^+_a \theta^a - \theta^+_a \p\theta^a \bigr)  \\
    &\quad + \p e^+ e + \p c^+ c + \p\psi^+\psi + \p\gamma^+\gamma +
      \p \bigl( A_a p^{+a} \bigr) .
  \end{align*}
\end{lemma}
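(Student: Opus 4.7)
The first identity $\[\Sigma,\Sigma\]=0$ is immediate: the Soloviev bracket of $f$ with $g$ receives contributions only from canonically conjugate pairs $(\Phi,\Phi^+)$ with one partner appearing in $f$ and the other in $g$, and inspection of $\Sigma$ shows that the only antifields present are $e^+$ and $\psi^+$, whose conjugate fields $e$ and $\psi$ are both absent from $\Sigma$. None of the other conjugate antifields $x^+_\mu$, $p^{+a}$, $\theta^+_a$, $c^+$ or $\gamma^+$ appears in $\Sigma$ either, so every pairing contributes zero.

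For $\[\G,\G\]=0$, only three types of pairing can contribute: $(x^\mu,x^+_\mu)$, $(p_a,p^{+a})$ and $(\theta^a,\theta^+_a)$. The $x$-pairing is the delicate one, since $x^\mu$ enters $\G$ only implicitly, through the frame $\om^\mu_a$, the spin connection $\om_\mu{}^b{}_c$, the curvature $R_{abcd}$ and the field strength $F_{ab}$. I expect the ordinary $x$-derivatives of $\om^\mu_a$ and $\om_\mu{}^b{}_c$ to assemble, via the torsion-free condition and the second Cartan equation, into covariant combinations; the residual curvature pieces are then killed by the algebraic and differential Bianchi identities recorded after the curvature definition, together with $dF=0$. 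The composite variable $X^+_a$ has been designed precisely to produce this packaging: its $\om$-correction terms are exactly what is required to convert bare $x$-derivatives inside the bracket into covariant ones.

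For $\[\Sigma,\G\]=\T$, I would work pairing by pairing. The ghost pairings $(c,c^+)$, $(e,e^+)$, $(\gamma,\gamma^+)$ and $(\psi,\psi^+)$ come from $\p e^+c$ and $\p\psi^+\gamma$ in $\Sigma$ meeting $c^+e$ and $\gamma^+\psi$ in $\G$, and reproduce the four summands $\p e^+e+\p c^+c+\p\psi^+\psi+\p\gamma^+\gamma$ of $\T$. The pairing $(p_a,p^{+a})$ matches $(p_\mu+A_\mu)\p x^\mu$ against $X^+_a p^{+a}$ and yields both the kinetic term $\p p^{+a}p_a$ and the total derivative $\p(A_ap^{+a})$ arising from the minimal coupling with $A$. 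The fermionic pairing $(\theta^a,\theta^+_a)$ contracts the Dirac kinetic piece of $\Sigma$ with $-\half\eta^{ab}\theta^+_a\theta^+_b$ to give $\half(\p\theta^+_a\theta^a-\theta^+_a\p\theta^a)$. Finally, the $(x^\mu,x^+_\mu)$ pairing delivers $-x^+_\mu\p x^\mu$, once the $\om$-corrections hidden in $X^+_a$ have absorbed the connection contributions generated by the term $\om_{\mu ab}\p x^\mu\theta^a\theta^b$ of $\Sigma$ and by differentiating $\om^\mu_a$ inside $X^+_a$. The main obstacle throughout is this $(x,x^+)$ bookkeeping: one must exhibit the non-covariant contributions produced by ordinary $x$-derivatives of $\om^\mu_a$, $\om_\mu{}^b{}_c$, $R_{abcd}$ and $F_{ab}$ as reassembling into covariant combinations before the Bianchi identities can finish the job. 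Everything else is routine, if tedious, algebraic verification.
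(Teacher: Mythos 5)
Your plan is essentially the paper's own proof: the paper likewise splits $\Sigma=\Sigma_0+\Sigma_1$ and $\G=\G_0+\G_1$ into a matter part and a ghost part, disposes of $\[\Sigma,\Sigma\]$ and the ghost-sector brackets by inspection, and attributes $\[\G_0,\G_0\]=0$ and the formula for $\[\Sigma_0,\G_0\]$ to the Cartan structure equations, the Bianchi identities, and $F=dA$, $dF=0$ --- exactly the inputs you identify; neither you nor the paper writes the $(x,x^+)$ computation out in full. One bookkeeping slip is worth correcting, since the entire content of the lemma is this bookkeeping: the terms $\p p^{+a}p_a$ and $\p\bigl(A_ap^{+a}\bigr)$ of $\T$ cannot arise from the $(p_a,p^{+a})$ pairing, because neither $\Sigma$ nor $\G$ contains a derivative of $p_a$ or of $p^{+a}$, so that pairing contributes only the derivative-free expression $\om_\mu^a\p x^\mu\,\bigl(\p_{0,p^{+a}}\G\bigr)$. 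Both terms in fact come from the $(x^\mu,x^+_\mu)$ pairing: the Soloviev summand with $k=1$, $\ell=0$ pairs $\p_{1,x^\mu}\Sigma = p_\mu+A_\mu-\half\om_{\mu ab}\theta^a\theta^b$ with $\p\bigl(\p_{0,x^+_\mu}\G\bigr)=\p\bigl(\om^\mu_ap^{+a}\bigr)$, and $\om_\mu^bp_b\,\om^\mu_a\p p^{+a}=p_a\p p^{+a}$ gives the kinetic term, while the $A_\mu$ piece combines with the $k=\ell=0$ summand to give $\p\bigl(A_ap^{+a}\bigr)$. This does not invalidate your approach, but it shows that the individual pairings cannot be matched term-by-term against $\T$ in the way your paragraph suggests: the $(p,p^+)$ and $(\theta,\theta^+)$ contributions mix with the $(x,x^+)$ ones through $X^+_a$ and the frame, and only their sum assembles into the stated covariant expression.
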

\begin{proof}
  We may decompose both $\Sigma$ and $\G$ into two parts, the first of
  which only involves the fields $\{x^\mu,p_\mu,x^+_\mu,p^{+\mu}\}$,
  and the second of which involves the remaining fields:
  \begin{align*}
    \Sigma_0 &= \bigl( p_\mu + A_\mu \bigr) \p x^\mu -
           \half \bigl( \eta_{ab} \theta^a \p\theta^b + \om_{\mu ab}
           \, \p x^\mu \theta^a \theta^b \bigr) , \\
    \Sigma_1 &= \p e^+ c + \p\psi^+\gamma , \\
    \G_0 &= X^+_a p^{+a} + \quart p^{+a} p^{+b} \theta^c \theta^d
           R_{abcd} + \half p^{+a} p^{+b} F_{ab} - \half
           \eta^{ab} \theta^+_a \theta^+_b , \\
    \G_1 &= c^+e + \gamma^+ \psi .
  \end{align*}
  The formulas $\[\Sigma,\Sigma\]=\[\G_1,\G_1\]=0$ and
  \begin{equation*}
    \[ \Sigma_1 , \G_1 \] = \p e^+ e + \p c^+ c + \p\psi^+\psi +
    \p\gamma^+\gamma
  \end{equation*}
  are easily verified, and it is also clear that
  $\[\Sigma_i,\G_j\]=\[\G_i,\G_j\]=0$ if $i\ne j$.

  The formulas
  \begin{equation*}
    \[\Sigma_0,\G_0\] = -x^+_\mu\p x^\mu+\p p^{+a} p_a + \half \bigl(
    \p\theta^+_a \theta^a - \theta^+_a \p\theta^a \bigr)+ \p
    \bigl( A_a p^{+a} \bigr)
\end{equation*}
and $\{\G_0,\G_0\}=0$ are a consequence of the structure equations and
the Bianchi identities, together with the corresponding equations
$F=dA$ and $dF=0$ for the magnetic potential and its field strength.
\end{proof}

The interest of this result is that $\ad(\T)=\t_0+\p\t_1$ where
$\t_0=\p$ and
\begin{align*}
  \t_1 &= - \bigl( x^+_\mu - p^{+a} \p_\nu A_a \bigr) \frac{\p\
         }{\p x^+_\nu} - \bigl( p_a + A_a \bigr) \frac{\p\ }{\p
         p_a} - \frac12 \left( \theta^+_a \frac{\p\ }{\p\theta^+_a}
         - \theta^a \frac{\p\ }{\p\theta^a} \right) \\
       &\quad - e \frac{\p\ }{\p e} - c \frac{\p\ }{\p c} -
         \psi \frac{\p\ }{\p\psi} - \gamma \frac{\p\ }{\p\gamma} .
\end{align*}

The following proposition gives a method of constructing solutions of
the classical master equation.
\begin{proposition}
  Let $\W\in\A^1$ satisfy the equations $\[\Sigma,\W\]=0$ and
  \begin{equation*}
    \[\[\G,\W\],\W\]=0 .
  \end{equation*}
  Then $\S=\Sigma+\[\G,\W\]+(\W+\t_1\W)\eps$
  is a solution of the classical master equation
  \begin{equation*}
    d\S + \half \[ \S , \S \] = 0 .
  \end{equation*}
\end{proposition}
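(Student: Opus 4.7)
The plan is to invoke Lemma~\ref{S} to reduce the master equation on $\CV^*$ to the ordinary classical master equation $\half\[S,S\]=\p\tilde{S}$ on $\A$, and then to verify the latter by expanding $\[S,S\]$ bilinearly and applying the graded Jacobi identity, using the structural identities from Lemma~\ref{QG} and the two hypotheses on $\W$ as input.

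Set $S=\Sigma+\[\G,\W\]$ and $\tilde{S}=\W+\t_1\W$. A parity count gives $\pa(S)=0$ and $\pa(\tilde{S})=1$, so Lemma~\ref{S} reduces the claim to showing $\half\[S,S\]=\p\tilde{S}$. By bilinearity and skew-symmetry (using $\pa(\Sigma)=\pa(\[\G,\W\])=0$, so the cross term appears twice with the same sign),
\begin{equation*}
  \half\[S,S\] = \half\[\Sigma,\Sigma\] + \[\Sigma,\[\G,\W\]\] +
  \half\[\[\G,\W\],\[\G,\W\]\].
\end{equation*}
The first summand vanishes by Lemma~\ref{QG}.

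For the cross term I apply the Jacobi identity with the sign $(-1)^{(\pa(\Sigma)+1)(\pa(\G)+1)}=-1$ coming from $\pa(\Sigma)=\pa(\G)=0$:
\begin{equation*}
  \[\Sigma,\[\G,\W\]\] = \[\[\Sigma,\G\],\W\] - \[\G,\[\Sigma,\W\]\].
\end{equation*}
The hypothesis $\[\Sigma,\W\]=0$ kills the second summand, while Lemma~\ref{QG} gives $\[\Sigma,\G\]=\T$. Finally, the structural formula $\ad(\T)=\p+\p\t_1$ stated immediately after Lemma~\ref{QG} yields $\[\T,\W\]=\p\W+\p(\t_1\W)=\p\tilde{S}$, exactly the desired right-hand side.

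I expect the main technical point to be the vanishing of the quadratic summand $\[\[\G,\W\],\[\G,\W\]\]$, which requires two applications of Jacobi. Taking $f=\[\G,\W\]$, $g=\G$, $h=\W$ (again with sign $-1$, as $\pa(\[\G,\W\])=0$),
\begin{equation*}
  \[\[\G,\W\],\[\G,\W\]\] = \[\[\[\G,\W\],\G\],\W\] - \[\G,\[\[\G,\W\],\W\]\] .
\end{equation*}
The second bracket vanishes by the second hypothesis on $\W$. For the first, skew-symmetry gives $\[\[\G,\W\],\G\]=\[\G,\[\G,\W\]\]$, and a further application of Jacobi with $f=g=\G$, $h=\W$ together with $\[\G,\G\]=0$ from Lemma~\ref{QG} yields $2\[\G,\[\G,\W\]\]=\[\[\G,\G\],\W\]=0$, the factor of $2$ arising because the sign $(-1)^{(\pa(\G)+1)^2}=-1$ collects the two copies of $\[\G,\[\G,\W\]\]$ on the same side. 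In effect, this last step is the statement that $\ad(\G)$ squares to zero on $\A$ because $\G$ has even parity and $\[\G,\G\]=0$, which completes the argument.
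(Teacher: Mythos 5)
Your proposal is correct and follows essentially the same route as the paper: reduce to the ordinary master equation $\half\[S,S\]=\p(\W+\t_1\W)$ via Lemma~\ref{S}, expand bilinearly, and dispose of each term using the graded Jacobi identity together with Lemma~\ref{QG}, the two hypotheses on $\W$, and the identity $\[\T,\W\]=\p\W+\p\t_1\W$. The only difference is cosmetic: the paper writes the quadratic term as $\quart\[\[\[\G,\G\],\W\],\W\]-\half\[\G,\[\[\G,\W\],\W\]\]$ in one step, whereas you reach the same conclusion by first isolating $\[\G,\[\G,\W\]\]$ and using $\[\G,\G\]=0$.
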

\begin{proof}
  The proposition is implied by Lemma~\ref{S}, if we can prove the
  equation
  \begin{equation*}
    \half \{ \Sigma + \[\G,\W\] , \Sigma + \[\G,\W\] \] =
    \p(\W+\t_1\W) .
  \end{equation*}
  By the graded Jacobi relation, we see that
  \begin{multline*}
    \half \{ \Sigma + \[\G,\W\] , \Sigma + \[\G,\W\]
    \] = \bigl( d\Sigma + \half \[ \Sigma , \Sigma \] \bigr) +
    \quart \[ \[ \[ \G , \G \] , \W \] , \W \] \\
    - \[ G , \[ \Sigma , \W \] \] - \half \[ \G , \[ \[ \G , \W \] ,
    \W \] \] + \[ \T , \W \] .
  \end{multline*}
  Both terms on the first line vanish by Lemma~\ref{QG}, while the
  first two terms on the second line vanish by hypothesis. The result
  follows from the formula
  \begin{equation*}
    \[ \T , \W \] = \p\W + \p\t_1\W .
    \qedhere
  \end{equation*}
\end{proof}

We now consider the expression
\begin{equation}
  \W = \half \eta^{ab} p_a p_b c + \half F_{ab} \theta^a
    \theta^b c + p_a \theta^a \gamma - e^+ \gamma^2 \in \A^1 .
    \label{W}
\end{equation}
It is clear that $\[\Sigma,\W\]=0$, and a somewhat lengthier
calculation shows that
\begin{equation*}
  \[\[\G,\W\],\W\] \in \A^2
\end{equation*}
vanishes as well. It follows that
\begin{align*}
  \S &= \Sigma + \[ \G , \W \] + (\W+\t_1\W)\eps \\
     &= \left( \frame_\mu^a p_a + A_\mu \right) \p x^\mu - \half
       \bigl( \eta_{ab} \theta^a \p\theta^b + \om_{\mu ab} \p x^\mu
       \theta^a \theta^b \bigr) \\
     &- \half e \left( \eta^{ab} p_a p_b + F_{ab} \theta^a \theta^b
       \right) + \psi p_a \theta^a \\
     &+ \bigl( \p e^+ - \eta^{ab} X^+_a p_b + \half p^{+a} p_b
       \theta^c \theta^d R_a{}^b{}_{cd}
       - p^{+a} p_b F_a{}^b + \theta^+_a \theta^b F^a{}_b +
       \half \om^\lambda_a p^+_a \theta^b \theta^c \nabla_\lambda
       F_{bc} \bigr) c \\
     &+ \bigl( \p\psi^+ - X^+_a \theta^a + \eta^{ab} \theta^+_a
       p_b + 2 e^+ \psi - p^{+a} \theta^b F_{ab} \bigr) \gamma - c^+ \gamma^2  \\
     &  - \bigl( \eta^{ab} p_a p_b c +  \tfrac32 p_a \theta^a \gamma -
       e^+ \gamma^2 - \half F_{ab} \theta^a \theta^b c +
       \eta^{ab} A_a p_b c + A_a \theta^a \gamma \bigr) \eps
\end{align*}
satisfies the classical master equation
$d\S + \half \[ \S , \S \] = 0$. In this equation, we have denoted by
$\nabla F$ the covariant derivative of the two-tensor $F$ with respect
to the Levi-Civita connection $\om_\mu{}^a{}_b$.

\begin{corollary}
  If $\int f\,dt\in\F^k$ is a cocycle in the complex $(\F,\s)$, where
  $\s$ is the vector field associated to the solution $\S$ of the
  classical master equation, then $\int\[\G,f\]\,dt\in\F^{k-1}$ is a
  cocycle in $\F$, called the \textup{\textbf{transgression}} of $f$.
  In particular, the long exact sequence
  \begin{equation*}
    \begin{tikzcd}
      \cdots \arrow{r} & H^{-1}(\A,\s) \arrow{r}{\p} & H^{-1}(\A,\s)
      \arrow{r} & H^{-1}(\F,\s) \arrow[out=-5,in=170,overlay]{lld} & \\
      & H^0(\A/\CC,\s) \arrow{r}{\p} & H^0(\A,\s) \arrow{r} &
      H^0(\F,\s)
      \arrow[out=-5,in=170,overlay]{lld} & \\
      & H^1(\A,\s) \arrow{r}{\p} & H^1(\A,\s) \arrow{r} & H^1(\F,\s)
      \arrow{r} & \cdots
    \end{tikzcd}
  \end{equation*}
  splits, in the sense that the morphisms $\p$ vanish.
\end{corollary}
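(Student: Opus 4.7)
The plan is to derive both assertions from a single graded-Jacobi computation of $\s\[\G,f\]$, using the identities $\[\Sigma,\G\]=\T$ from Lemma~\ref{QG} and $\ad(\T)=\p+\p\t_1$ from the discussion immediately afterward. The preliminary step is to identify $\[S,\G\]$ with $\T$, where $S=\Sigma+\[\G,\W\]$ is the ghost-zero part of $\S$. Lemma~\ref{QG} supplies $\[\Sigma,\G\]=\T$ directly, while the term $\[\[\G,\W\],\G\]$ vanishes: graded Jacobi combined with $\[\G,\G\]=0$ gives $2\,\[\G,\[\G,\W\]\]=\[\[\G,\G\],\W\]=0$, the factor of $2$ arising from the sign $(-1)^{(\pa(\G)+1)^{2}}=-1$, and skew-symmetry then yields $\[\[\G,\W\],\G\]=0$.

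The main computation is then short. For any $f\in\A^k$ with $\s f=\p h$, graded Jacobi gives
\begin{equation*}
\s\[\G,f\]=\[\[S,\G\],f\]\pm\[\G,\s f\]=\[\T,f\]\pm\p\[\G,h\].
\end{equation*}
Using $\ad(\T)=\p+\p\t_1$ and $\p$-linearity of the bracket, this equals $\p(f+\t_1 f\pm\[\G,h\])$, a total derivative. Hence $\int\[\G,f\]\,dt$ is a cocycle in $\F^{k-1}$, which is the transgression statement.

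For the splitting, consider a class $[f]\in H^k(\A/\CC,\s)$ represented by $f\in\A^k$ with $\s f\in\CC$. Since Soloviev brackets with constants vanish, the previous identity specializes to $\p f=\s\[\G,f\]-\p\t_1 f$ in $\A^k$, so $[\p f]=-[\p\t_1 f]$ in $H^k(\A,\s)$; equivalently, the connecting morphism $\delta$ sends $[\int\[\G,f\]\,dt]\in H^{k-1}(\F,\s)$ to $[(1+\t_1)f]\in H^k(\A/\CC,\s)$. Thus the vanishing of $\p_*:H^k(\A/\CC,\s)\to H^k(\A,\s)$ reduces to the invertibility of $1+\t_1$ on $H^k(\A/\CC,\s)$, which I would establish by decomposing $f$ along the half-integer eigenspaces of the weight derivation $\t_1$ and iterating on each.

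The main obstacle will be the $\t_1=-1$ eigenspace, on which the above identity degenerates to $\s\[\G,f\]=0$, so $\[\G,f\]$ is an honest $\A$-cocycle rather than merely an $\F$-cocycle. That case requires a separate argument using the explicit form of $\G$ in Lemma~\ref{QG} to exhibit $\p f$ as an $\s$-coboundary; everything else is a formal consequence of Jacobi and $\[\Sigma,\G\]=\T$.
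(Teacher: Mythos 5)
Your proof of the transgression statement is correct and is essentially the paper's own argument: the paper likewise uses the graded Jacobi identity together with $\[\G,\G\]=0$ to obtain $\[\Sigma+\[\G,\W\],\[\G,f\]\]=\[\T,f\]-\[\G,\[\Sigma+\[\G,\W\],f\]\]$, substitutes $\[\Sigma+\[\G,\W\],f\]=\p g$, and invokes $\ad(\T)=\p+\p\t_1$ to exhibit the result as a total derivative. Your factor-of-two derivation of $\[\G,\[\G,\W\]\]=0$ is the intended one, and writing $\s$ in place of $\ad(\Sigma+\[\G,\W\])$ is harmless for this part, since the two differ by the total-derivative operator $\sum_{k\ge1}\p^k\s_k$, which is invisible in $\F$.

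Where you part ways with the paper is the splitting, and here your proposal has a real gap --- one you acknowledge, but which should not be waved away. The paper deduces the vanishing of the morphisms $\p$ from the transgression in a single sentence; your computation $\delta[\int\[\G,f\]\,dt]=[(1+\t_1)f]$ makes explicit what that sentence suppresses, namely that surjectivity of the connecting map (equivalently, vanishing of $\p_*$) follows from the existence of transgressions only once $1+\t_1$ has been inverted on $H^k(\A/\CC,\s)$. That is genuinely extra input, and your plan does not supply it: $\t_1$ takes the value $-1$ on the generators $p_a$, $x^+_\mu$, $e$, $c$, $\psi$, $\gamma$, so the weight $-1$ part of $\A$ is large; moreover $\t_1$ does not commute with $\s$ on the nose (one only has $\[S,\T\]=\half\[\[S,S\],\G\]=\p\[\tilde S,\G\]$), so it is not even immediate that $1+\t_1$ acts on cohomology, and the affine pieces $-A_a\,\p/\p p_a$ and $p^{+a}\p_\nu A_a\,\p/\p x^+_\nu$ mean your eigenspace decomposition must be replaced by a filtration argument. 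Your formula for $\delta\circ\tau$ also drops the correction terms $\[\G,g\]$ and $\sum_{k\ge1}\p^{k-1}\s_k\[\G,f\]$ arising from the difference between $\s$ and $\ad(S)$, and these survive into the class in $H^k(\A/\CC,\s)$. As written, the second half of your argument is a plan rather than a proof. The place where the splitting actually becomes verifiable is the spectral-sequence computation of Section~4, which exhibits every class of $H^{-k}(\F,\s)$ not coming from $H^{-k}(\A,\s)$ as an explicit transgression; that statement is equivalent to the vanishing of the maps $\p$, and is the substance behind the corollary's final sentence.
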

\begin{proof}
  Since $\[\G,\G\]=0$, we have the equation
  \begin{equation*}
    \[ \Sigma + \[\G,\W\] , \[ \G , f \] \] = \[ \T , f \] - \[ \G
      , \[ \Sigma + \[ \G , \W \] , f \] \] .
  \end{equation*}
  By hypothesis, $\[\Sigma+\[\G,\W\],f\]=\p
  g$ is a total derivative. Thus
  \begin{equation*}
    \[ \Sigma + \[\G,\W\] , \[ \G , f \] \] = \p ( f + \t_1f + g ) .
  \end{equation*}
  Hence $\[\G,f\]$ descends to a cocycle in $\F^{k-1}$. This shows
  that the connecting morphisms $\p$ in the long-exact sequence
  vanish.
\end{proof}

We close this section by showing how to rewrite $S$ as an AKSZ
action. In AKSZ models, the fields may be assembled into differential
forms of homogeneous total degree: in our case, the sum of a $0$-form
of ghost number $k$ and a $1$-form of ghost number $k-1$. These
differential forms are as follows:
\begin{gather*}
  \begin{aligned}
    \mathbf{x}^\mu &= x^\mu + dt\,\[\G,x^\mu\]
    & {\boldsymbol\theta}^a &= \theta^a + dt\,\[\G,\theta^a\] &
    \mathbf{p}_a &= p_a + dt\,\[\G,p_a\]
  \end{aligned} \\
  \begin{aligned}
    \mathbf{c} &= c + dt\,\[\G,c\] &
    {\boldsymbol\gamma} &= \gamma + dt\,\[\G,\gamma\] \\
    \mathbf{e}^+ &= e^+ + dt\,\[\G,e^+\]
    & {\boldsymbol\psi}^+ &= \psi^+ + dt\,\[\G,\psi^+\]
  \end{aligned}
\end{gather*}
The action $S$ is the one-form component of the differential form
\begin{gather*}
  \bigl( \frame_\mu^a(\mathbf{x}) \mathbf{p}_a + A_\mu(\mathbf{x}) -
  \half \om_{\mu ab}(\mathbf{x}) {\boldsymbol\theta}^a
  {\boldsymbol\theta}^b \bigr) d\mathbf{x}^\mu - \half \eta_{ab}
  {\boldsymbol\theta}^a d{\boldsymbol\theta}^b + \mathbf{c}\,
  d\mathbf{e}^+ + {\boldsymbol\gamma}\,d{\boldsymbol\psi}^+ \\
  + \half \eta^{ab} \mathbf{p}_a \mathbf{p}_b \mathbf{c} + \half
  F_{ab}(\mathbf{x}) {\boldsymbol\theta}^a {\boldsymbol\theta}^b
  \mathbf{c} + \mathbf{p}_a {\boldsymbol\theta}^a {\boldsymbol\gamma}
  - \mathbf{e}^+ {\boldsymbol\gamma}^2 ,
\end{gather*}
where we recognize the expressions $\Sigma$ and $\W$ of Lemma~\ref{QG}
and \eqref{W} respectively on the first and second lines. The
resemblance between the action in an AKSZ model and the Chern-Simons
action is clear after changing variables from the field $\mathbf{p}_a$
to the field
\begin{equation*}
  \mathbf{P}_\mu = \frame_\mu^a(\mathbf{x}) \mathbf{p}_a +
  A_\mu(\mathbf{x}) - \half \om_{\mu ab}(\mathbf{x})
  {\boldsymbol\theta}^a {\boldsymbol\theta}^b .
\end{equation*}

\section{Calculation of BV cohomology}

The method of \cite[Section~7]{cohomology} may be used to calculate
the BV cohomology of the spinning particle in the general case. Let
$\CO$ be the ring of functions on the target $U\subset\RR^d$ of the
spinning particle: we may take any of the standard structure rings of
geometry, namely algebraic, analytic or infinitely-differentiable
functions, or even power series. Let $\A$ is the graded polynomial
algebra over $\CO$ generated by the remaining variables of the theory,
namely
\begin{multline*}
  \{\p^\ell x^\mu\}_{\ell>0}\cup \{\p^\ell\theta^a,\p^\ell p_a,\p^\ell
  x^+_\mu,\p^\ell\theta^+_a,\p^\ell p^{a+}\}_{\ell\ge0} \\
  \cup \{\p^\ell e,\p^\ell\psi,\p^\ell
  e^+,\p^\ell\psi^+\}_{\ell\ge0}\cup \{\p^\ell c,\p^\ell\gamma,\p^\ell
  c^+,\p^\ell\gamma^+\}_{\ell\ge0} .
\end{multline*}
Let
\begin{equation*}
  \A_\gamma^*=\A^*\otimes_{\CC[\gamma]}\CC[\gamma,\gamma^{-1}]
\end{equation*}
be the localization of $\A^*$, obtained by inverting the ghost
$\gamma$.

Given a vector $v$ with components $v_a$, define
\begin{equation*}
  \iota(v) = \eta^{ab} v_a \frac{\p\ }{\p\theta^b} .
\end{equation*}
If $f\in\CO$, denote by $\nabla f$ the vector with components
\begin{equation*}
  (\nabla f){}_a = \om^\mu_a \bigl( \p_\mu + \om_\mu{}^b{}_b f \bigr) .
\end{equation*}
We may interpret the function $f$ as representing a section of a line
bundle over $U$ with connection form $\om_\mu{}^b{}_bdx^\mu$.

Let $\Om = \theta^1 \dots \theta^d$. Given a function $f\in\CO$ and
$k\ge0$, consider the following elements of $\A_\gamma^{-k-1}$:
\begin{align*}
  A_k(f) &= (\psi^+)^{k+1} c f \Om \gamma^{-1} , \\
  Z_k(f) &= (k+1) (\psi^+)^k f \Om \gamma^{-1} +
           (\psi^+)^{k+1} c \iota(\nabla f) \Om \gamma^{-1} .
\end{align*}
After application of the BV differential $\s$ to these expressions,
the poles in $\gamma$ cancel, showing that the following expressions
are cocycles in $\A^{-k}$ with respect to the differential $\s$:
\begin{align*}
  \alpha_k(f) &= \s(A_k(f)) , & \zeta_k(f) &= \s(Z_k(f)) .
\end{align*}
Consider also the transgressions of these cocycles:
\begin{align*}
  \tilde\alpha_k(f)
  &= \[ \G , \alpha_{k-1}(f) \] ,
  & \tilde\zeta_k(f)
  &= \[ \G , \zeta_{k-1}(f) \] .
\end{align*}

Let $\R$ be the quotient of the differential graded superalgebra
$\A^*$ by the differential ideal generated by the fields
\begin{equation*}
  \{e,\psi,c\} \cup \{
  x^+_\mu,\theta^+_a,p^{+a},e^+,\psi^+,c^+,\gamma^+\}
\end{equation*}
Denote by $\P_a$, $\Theta^a$, $\X^\mu$ and $\Gamma$ the zero-modes
$\int p_a\,dt$, $\int\theta^a\,dt$, $\int x^\mu\,dt$ and
$\int\gamma\,dt$ respectively. Then $\R$ is the graded superalgebra
\begin{equation*}
  \CO[\Theta^a,\P_a,\Gamma] /
  \bigl(\P_a\Theta^a,\eta^{ab}\P_a\P_b+F_{ab}(\X)\Theta^a\Theta^b,\Gamma^2 \bigr)
\end{equation*}
with differential $\Gamma\Q$, where $\Q$ is the differential operator
\begin{multline}
  \label{Q}
  \Q = \om^\mu_a(\X) \Theta^a \frac{\p\ }{\p\X^\mu}
  + \eta^{ab} \P_a \frac{\p\ }{\p\Theta^b}
  + \om^\mu_c(\X) \om_\mu{}^a{}_b(\X) \Theta^c \left( \P_a \frac{\p\
    }{\p\P_b} - \Theta^b \frac{\p\ }{\p\Theta^a} \right) +
    F_{ab}(\X) \Theta^a \frac{\p\ }{\p\P_b} .
\end{multline}
We denote the element $\Theta^1\dots\Theta^d$ of $\R^0$ by the same
symbol $\Om$ as in $\A^0$.

The map $\xi^0$ from $\CO[\Theta^a,\P_a]$ to $\A^0$ which takes a
function $u$ to the corresponding function $\xi^0(u)$ in the variables
$\{x^\mu,\theta^a,p_a\}$ induces a map from $H^0(\R)$ to
$H^0(\A,\s)$. Observe that $\xi^0(\iota(P)\Om)=-\zeta_0(1)$.

Similarly, the map from $\CO[\Theta^a,\P_a]$ to $\A^1$ which takes a
function $v$ to the element
\begin{equation*}
  \xi^1(v) = \gamma v + c \Q v
\end{equation*}
induces a map from $H^1(\R)$ to $H^1(\A,\s)$. Define the
transgressions of the classes $\xi^0(u)$ and $\xi^1(v)$:
\begin{align*}
  \tilde\xi^{-1}(u) &= \[ \G , \xi^0(u) \] &
  \tilde\xi^0(v) &= \[ \G , \xi^1(v) \] .
\end{align*}

The following theorem has the same form as in the special case where
$g^{\mu\nu}$ is constant and $A_\mu=0$, discussed in
\cite{cohomology}.

\begin{theorem}
  \begin{equation*}
    H^{-k}(\F,\s) =
    \begin{cases}
      \Bigl\{ \int \bigl( \alpha_k(f) + \zeta_k(g) +
      \tilde\alpha_k(\tilde{f}) + \tilde\zeta_k(\tilde{g}) \bigr) \,dt \,
      \Big| \, f,g,\tilde{f},\tilde{g} \in\CO \Bigr\} & k>1 , \\[10pt]
      \Bigl\{ \int \bigl( \tilde{\xi}^{-1}(u) + \alpha_1(f) + \zeta_1(g)
      + \tilde\alpha_1(\tilde{f}) + \tilde\zeta_1(\tilde{g}) \bigr) \,dt
      \,\Big|\, & \\
      \hfill u\in H^0(\R/\CC), f,g,\tilde{f} \in\CO , \tilde{g} \in
      \CO/\CC \Bigr\} & k=1 , \\[10pt]
      \Bigl\{ \int \bigl( \xi^0(u) + \tilde\xi^0(v) + \alpha_0(f) +
      \zeta_0(g) \bigr) \, dt \,\Big|\, & \\
      \hfill u \in H^0(\R) , v\in H^1(\R) , f \in \CO , g\in \CO/\CC
      \Bigr\} & k=0 , \\[10pt]
      \Bigl\{\int\xi^1(v)\,dt\,\Big|\, v\in H^1(\R) \Bigr\} & k=-1 ,
      \\[10pt]
      0 & k<-1 .
    \end{cases}
  \end{equation*}
\end{theorem}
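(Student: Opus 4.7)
The plan is to adapt the computation of \cite[Section~7]{cohomology} to the curved target, modifying it only where the Levi-Civita connection, the curvature $R$, and the magnetic field $F$ force changes. The argument factors into three stages: splitting $H^*(\F,\s)$ into primary and transgressed contributions, reducing $H^*(\A,\s)$ by a spectral sequence, and identifying the surviving classes with those in the statement.

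First I invoke the Corollary above: because the connecting morphisms $\p$ in the long exact sequence vanish, every class in $H^{-k}(\F,\s)$ is represented by a pair of cocycles, one lifting directly to $H^{-k}(\A,\s)$ and one being the transgression $\[\G,-\]$ of a class in $H^{-k+1}(\A,\s)$. Thus the four-fold parametrisation $(\alpha_k,\zeta_k,\tilde\alpha_k,\tilde\zeta_k)$ for $k>1$, and the analogous primary/transgressed pairings $(\xi^0,\tilde\xi^{-1})$ and $(\xi^1,\tilde\xi^0)$ in the low-degree cases, appear automatically once $H^{-k}(\A,\s)$ has been computed in all degrees.

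Second, I compute $H^*(\A,\s)$ by a two-step spectral sequence. The first differential to kill is the part of $\s$ coming from $\Sigma_1+\G_1=\p e^+c+\p\psi^+\gamma+c^+e+\gamma^+\psi$; the associated Koszul contraction eliminates the subalgebra generated by $\{e,e^+,c,c^+,\psi,\psi^+\}$ together with all their $\p$-derivatives. A further spectral sequence filtered by the jet index then contracts the non-zero-mode generators $\p^\ell(-)$ with $\ell\ge1$ against their natural partners, leaving only the zero modes $\X^\mu,\P_a,\Theta^a,\Gamma$ together with the ghost $\gamma$ itself. The residual differential reduces on zero modes to the operator $\Gamma\Q$ of \eqref{Q}, so the $E_\infty$-page is controlled by $H^*(\R)$ together with further classes produced by localisation in $\gamma$. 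Concretely, zero-mode cohomology contributes $\xi^0(u)$ and $\xi^1(v)$ in degrees $k=0$ and $k=-1$; the families $\alpha_k(f)$ and $\zeta_k(f)$ arise because the elements $A_k(f),Z_k(f)\in\A_\gamma^{-k-1}$ live only in the $\gamma$-localisation, while applying $\s$ cancels the $\gamma^{-1}$ pole and produces closed cocycles in $\A^{-k}$ representing new classes. The mild substitution of $\CO/\CC$ for $\CO$ at $k=0$ and $k=1$ reflects the vanishing of certain transgressions of constants.

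The main obstacle is controlling the curvature and gauge corrections, which were absent in \cite{cohomology}. In the curved setting $\ad(\G)$ contains nonlinear terms built from $R_{abcd}$ and $F_{ab}$ acting on the antifields $p^+$ and $\theta^+$ via $X^+_a$ and the Levi-Civita connection; these could in principle obstruct the contractibility of the spectral sequence or spoil the cocycle property of the candidate representatives. The essential input is Lemma~\ref{QG}: it encodes the Bianchi identities (both for the Riemann curvature and for $dF=0$) compactly as $\[\G,\G\]=0$, and hence implies $\Q^2=0$ on $\R$, stable pole-cancellation in $\s(A_k(f))$ and $\s(Z_k(f))$, and the absence of higher curvature corrections at $E_\infty$. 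Once these invariances are in hand, the flat-case enumeration of \cite{cohomology} goes through verbatim, yielding the formulas displayed in the theorem.
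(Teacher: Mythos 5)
Your overall architecture --- use the Corollary to split $H^{-k}(\F,\s)$ into a primary piece from $H^{-k}(\A,\s)$ and a transgressed piece $\[\G,-\]$ from $H^{-k+1}(\A,\s)$, then compute $H^*(\A,\s)$ by a spectral sequence --- is the paper's. But your description of the spectral sequence contains a substantive error that would destroy the result. You claim the first contraction ``eliminates the subalgebra generated by $\{e,e^+,c,c^+,\psi,\psi^+\}$ together with all their $\p$-derivatives,'' leaving only $\X^\mu,\P_a,\Theta^a,\Gamma$. In the paper's filtration the differential $d_0$ contains $\p e^+\tfrac{\p\ }{\p c^+}-\p c\tfrac{\p\ }{\p e}+\p\psi^+\tfrac{\p\ }{\p\gamma^+}+\p\gamma\tfrac{\p\ }{\p\psi}$, which pairs $\p^{\ell}c^+$ with $\p^{\ell+1}e^+$ and $\p^{\ell}e$ with $\p^{\ell+1}c$ (and similarly for the fermionic sector); consequently the \emph{zero modes} $\E^+=\int e^+\,dt$, $\Psi^+=\int\psi^+\,dt$ and $\C=\int c\,dt$ survive to $E_1$. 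This is not a technicality: the classes $\AA_j(f)$ and $\BB_j(g)$ on the $E_2$ page are built from powers of $\Psi^+$ and from $\E^+$, and they are precisely what becomes the infinite families $\alpha_k,\zeta_k$ ($k\ge1$) --- i.e.\ the nonvanishing of $H^{-k}(\F,\s)$ in all negative degrees, which is the point of the theorem. With $\E^+,\Psi^+,\C$ discarded you are left with $\CO[\Theta^a,\P_a,\Gamma]$ and the differential $\Gamma\Q$, which cannot produce anything in ghost number $-k$ for $k\ge 2$. Relatedly, the residual differential is not just $\Gamma\Q$: the paper's $d_1$ is the Koszul differential of the constraints $\eta^{ab}\P_a\P_b+F_{ab}(\X)\Theta^a\Theta^b$ and $\P_a\Theta^a$ acting through $\tfrac{\p\ }{\p\E^+}$ and $\tfrac{\p\ }{\p\Psi^+}$, and $d_2$ contains the extra terms $-\C\Q^2+\Gamma^2\tfrac{\p\ }{\p\C}+2\E^+\Gamma\tfrac{\p\ }{\p\Psi^+}$, all of which are needed to identify $E_3$. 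The localisation $\A^*_\gamma$ is only a device for writing explicit cocycle representatives $\s(A_k(f))$, $\s(Z_k(f))$; it is not a mechanism that adds classes to the $E_\infty$ page, so your appeal to ``further classes produced by localisation in $\gamma$'' does not repair the count.

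Two smaller points. First, $\G_1=c^+e+\gamma^+\psi$ is a summand of $\G$, not of $S=\Sigma+\[\G,\W\]$, so it is not ``part of $\s$''; the relevant terms of $d_0$ in that sector come from $\Sigma_1$ and from $\[\G,\W\]$. Second, you assert the $E_\infty$ page is ``controlled by'' the indicated data without addressing convergence; the paper explicitly flags that convergence of this spectral sequence is not automatic and is established by lifting the $E_3$ classes to the explicit cocycles $\alpha_k(f)$, $\zeta_k(g)$, $\xi^0(u)$, $\xi^1(v)$ and their transgressions in the original complex. That lifting step (including the verification that the $\gamma^{-1}$ poles cancel in $\s(A_k(f))$ and $\s(Z_k(f))$, now with the curvature and $F$-dependent terms present) needs to be carried out, not merely cited.
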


The proof of the theorem follows along the same lines as in Section~7
of \cite{cohomology}. We use the filtration on the complex $(\A^*,\s)$
associated to the parameter $\sigma=0$, which assigns bidegrees to the
fields and their derivatives according to the following table:
\begin{equation*}
  \renewcommand{\arraystretch}{1.4}
  \begin{tabular}{|C|C|C|} \hline \hbox{\ $\Phi$\ } &
  \hbox{\ $(p,q)$\ } &
  \hbox{\ $(p^+,q^+)$\ } \\ \hline
    x & (0,0) & (0,-1) \\
    \theta & (0,0) & (0,-1) \\
    p & (0,0) & (0,-1) \\
    e & (2,0) & (-1,0) \\
    \psi & (2,0) & (-1,0) \\
    c & (2,-1) & (-1,-1) \\
    \gamma & (2,-1) & (-1,-1) \\ \hline
    \end{tabular}
\end{equation*}
Here, $p$ and $p^+$ are the filtration degrees of a field $\Phi$ and
its antifield $\Phi^+$, and $q$ and $q^+$ are the complentary degrees,
such that $\gh(\Phi)=p+q$ and $\gh(\Phi^+)=p^++q^+$. We obtain a
spectral sequence $E^{pq}_r$ such that $E^{pq}_r=0$ if $q>0$, and
$d_r:E^{pq}_r\to E^{p+r,q-r+1}_r$.

It is not \emph{a priori} evident that this spectral sequence
converges.  We will see that, as in \cite{cohomology}, $d_r$ vanishes
for $r\ge3$. its convergence is proved by lifting the cohomology
classes in $E_3$ to the explicit nontrivial cocycles in the original
complex that were introduced above.

The differential $d_0:E_0^{pq}\to E_0^{p,q+1}$ of the initial page
$E_0$ is as follows:
\begin{align*}
  d_0 &= - \bigl( \p p_\mu + \half \p_{[\mu}\om_{\nu]ab} \p x^\nu
        \theta^a \theta^b - \om_{\mu ab} \theta^a \p\theta^b -
        F_{\mu\nu} \p x^\nu \bigr) \frac{\p\ }{\p x^+_\mu} \\
      &\quad + \bigl( \eta_{ab} \p \theta^b + \om_{\mu ab} \p x^\mu
        \theta^b \bigr) \frac{\p\ }{\p\theta^+_a} + \om^a_\mu \p x^\mu
        \frac{\p\ }{\p p^{a+}} \\
      &\quad + \p e^+ \frac{\p\ }{\p c^+} + \p\psi^+ \frac{\p\
        }{\p\gamma^+} - \p c \frac{\p\ }{\p e} + \p\gamma \frac{\p\
        }{\p\psi} .
\end{align*}
It follows that $E_1$ is the tensor product of the algebra $\CO$, with
generators $\X^\mu=\int x^\mu\,dt$, and the free graded commutative
algebra with the following generators:
\begin{equation*}
  \renewcommand{\arraystretch}{1.4}
  \begin{tabular}{|r|c|} \hline
    $\gh$ & \textup{generators} \\ \hline
    $-1$ & $\E^+=\int e^+\,dt$, $\Psi^+=\int \psi^+\,dt$ \\
    $0$ & $\Theta^a=\int \theta^a\,dt$,
          $\P_a=\int p_a\,dt$ \\
    $1$ & $\C=\int c\,dt$, $\Gamma=\int\gamma\,dt$ \\
    \hline
  \end{tabular}
\end{equation*}

The differential $d_1:E_1^{pq}\to E_1^{p+1,q}$ is given by the
formula
\begin{equation*}
  d_1 = - \half \left( \eta^{\mu\nu} \P_\mu \P_\nu + F_{ab}(\X)
    \Theta^a \Theta^b \right) \frac{\p\ }{\p\E^+}
  - \P_\mu \Theta^\mu \frac{\p\ }{\p\Psi^+} .
\end{equation*}
Cohomology classes in $E_2=H^*(E_1,d_1)$ take the general form
\begin{equation*}
  z = [b_0] + \sum_{j>0} \Bigl( [\AA_j(f_j)] + [\BB_j(g_j)] \Bigr) ,
\end{equation*}
where $[b_0]$ is an element of the ring
\begin{equation*}
  \CO[\Theta^a,\P_a,\C,\Gamma]/\bigl( \P_a\Theta^a ,
  \eta^{ab}\P_a\P_b+F_{ab}(\X)\Theta^a\Theta^b \bigr)
\end{equation*}
and
\begin{align*}
  \AA_j(f) &= 2j (\Psi^+)^{j-1} \E^+ f \, \Om - (\Psi^+)^j f
    \iota(\P)\Om , &
  \BB_j(g) &= (\Psi^+)^j f \, \Om ,
\end{align*}
for $f,g\in\CO[\C,\Gamma]$.

The differential $d_2:E_2^{pq}\to E_2^{p+2,q-1}$ is given by the
formula
\begin{equation*}
  d_2 = - \C \Q^2 - \Gamma \Q + \Gamma^2 \frac{\p\ }{\p\C} +
  2\E^+\Gamma \frac{\p\ }{\p\Psi^+} ,
\end{equation*}
where $\Q$ is the differential operator introduced in \eqref{Q}. The
remainder of the proof of the theorem is as in \cite{cohomology}.

\section{The quantum master equation}

The Batalin-Vilkovisky formalism for quantization of a solution $S$ of
the classical master equation involves a series
\begin{equation*}
  \SS = S + \sum_{n=1}^\infty \hbar^n S_n
\end{equation*}
satisfying the quantum master equation
\begin{equation*}
  \half \{ \tint \SS \, dt , \tint \SS \, dt \} + \hbar \tint \Delta
  \SS \, dt = 0 .
\end{equation*}
Expanding in powers of $\hbar$, we see that this amounts to the
sequence of equations
\begin{equation*}
  \s \tint S_{n+1} \, dt + \half \sum_{k=1}^n \{ \tint S_k \, dt ,
  \tint S_{n-k+1} \, dt \} + \hbar \tint \Delta S_n \, dt = 0 ,
  \quad n\ge0 .
\end{equation*}
Here, $\Delta$ is the differential operator
\begin{equation*}
  \Delta \tint f \, dt = \sum_{\Phi_i} (-1)^{\pa(\Phi_i)} \int
  \sum_{k,\ell} \lim_{s\to t} \p^k_s\p^\ell_t \delta(s,t) \frac{\p^2f}
  {\p(\p^k\Phi_i(s))\p(\p^\ell\Phi_i^+(t))} \, dt .
\end{equation*}

The operator $\Delta$ is ill-defined, owing to ultra-violet
divergences. But in the case of the spinning particle, there is a
great simplification, since the only contribution to $\Delta S$ comes
from the terms $-\eta^{ab}X^+_ap_bc$ and $-X^+_a\theta^a\gamma$ of
$S$, and we have
\begin{align*}
  \Delta S &= \Delta \left( - \frame_a^\mu x^+_\mu \bigl( \eta^{ab}
             p_b c + \theta^a \gamma \bigr) + \frame_a^\mu
             \om_\mu{}^c{}_d \bigl( - \eta^{ab} p_cp^{d+}p_b c +
             \theta^+_c\theta^d \theta^a \gamma \bigr) \right) \\
           &= C_\Lambda \bigl( - \p_\mu \frame^\mu_a +
             \frame_a^\mu \om_\mu{}^a{}_b \bigr) \bigl( \eta^{bc}
             p_c c + \theta^b \gamma \bigr) \\
           &= - C_\Lambda \, \s \log \det\bigl( \frame_\mu^a \bigr) .
\end{align*}
where $C_\Lambda$ is a function of the cut-off $\Lambda$. (In fact,
$C_\Lambda$ vanishes in the heat-kernel regularization, since the
world-line $\RR$ is odd-dimensional.) We see that
$S_1=C_\Lambda \log \det\bigl( \frame_\mu^a \bigr)$. Since
$\{S_1,S_1\}$ and $\Delta S_1$ both clearly vanish, we also see that
$S_n=0$, $n>1$. This shows that the solution to the quantum master
equation associated for the spinning particle with curved target is
\begin{equation*}
  \S = S + C_\Lambda \log \det\bigl( \frame_\mu^a \bigr) .
\end{equation*}

\subsection*{Acknowledgements}

  I am grateful to Chris Hull for introducing me to the first-order
  formalism of the spinning particle. This research is partially
  supported by EPSRC Programme Grant EP/K034456/1 ``New Geometric
  Structures from String Theory'' and Collaboration Grant \#243025 of
  the Simons Foundation. This paper was written while the author was a
  member of the program ``Higher Structures in Geometry and Physics''
  at the Max-Planck-Institut in Bonn in 2016.

\begin{bibdiv}
  \begin{biblist}

    \bib{AKSZ}{article}{
      author={Alexandrov, M.},
      author={Schwarz, A.},
      author={Zaboronsky, O.},
      author={Kontsevich, M.},
      title={The geometry of the master equation and topological quantum field
        theory},
      journal={Internat. J. Modern Phys. A},
      volume={12},
      date={1997},
      number={7},
      pages={1405--1429},
    }

    \bib{Dickey}{article}{
      author={Dickey, L. A.},
      title={Poisson brackets with divergence terms in field theories: three
        examples},
      conference={
        title={Higher homotopy structures in topology and mathematical physics
        },
        address={Poughkeepsie, NY},
        date={1996},
      },
      book={
        series={Contemp. Math.},
        volume={227},
        publisher={Amer. Math. Soc., Providence, RI},
      },
      date={1999},
      pages={67--78},
%      review={\MR{1665461 (99k:58065)}},
%      doi={10.1090/conm/227/03253},
    }

    \bib{darboux}{article}{
      author={Getzler, Ezra},
      title={A Darboux theorem for Hamiltonian operators in the formal calculus
        of variations},
      journal={Duke Math. J.},
      volume={111},
      date={2002},
      number={3},
      pages={535--560},
      % issn={0012-7094},
      % review={\MR{1885831 (2003e:32026)}},
      % doi={10.1215/S0012-7094-02-11136-3},
    }

  \bib{cohomology}{article}{
    author={Getzler, Ezra},
    title={The Batalin-Vilkovisky cohomology of the spinning
      particle},
    journal={to appear, JHEP},
    eprint={arXiv:1511.02135}
  }

  \bib{KMGZ}{article}{
    author={Kruskal, M. D.},
    author={Miura, R. M.},
    author={Gardner, C. S.},
    author={Zablusky, N. J.},
    title={Korteweg-de Vries equation and
      generalizations. V. Uniqueness and nonexistence of polynomial
      conservation laws},
    journal={J. Math. Phys.},
    volume={11},
    date={1970},
    pages={952--960}
    }

    \bib{Soloviev}{article}{
      author={Soloviev, Vladimir O.},
      title={Boundary values as Hamiltonian variables. I. New Poisson brackets},
      journal={J. Math. Phys.},
      volume={34},
      date={1993},
      number={12},
      pages={5747--5769},
}

  \end{biblist}
\end{bibdiv}

\end{document}